\DeclareMathOperator*{\argmax}{arg\,max}
\DeclareMathOperator*{\argmin}{arg\,min}
\newtheorem{thm}{Theorem}
\newlist{enumthm}{enumerate}{1}
\setlist[enumthm]{label=(\roman*)}
\newtheorem{rem}{Remark}
\newtheorem{assumption}{Assumption}
\newtheorem{definition}{Definition}
\newtheorem{example}{Example}
\newtheorem{prop}{Proposition}
\definecolor{blu}{RGB}{0, 102, 204}
\definecolor{purp}{RGB}{128,0,128}
\definecolor{rd}{RGB}{255,69,0}
\newcommand{\ceil}[1]{\left\lceil #1 \right\rceil}
\date{}
\begin{document}

\title{
	Learning to Decode: Reinforcement Learning for Decoding of Sparse Graph-Based Channel Codes
}


\author{
  		Salman Habib$\dagger$, Allison Beemer$^\ast$, and J{\"o}rg Kliewer$^\dagger$ \\
  		$\dagger$Helen and John C.~Hartmann Dept. of Electrical and Computer Engineering, \\
  		New Jersey Institute of Technology, Newark, NJ 07102 \\      
        $^\ast$Dept. of Mathematics, University of Wisconsin-Eau Claire, Eau Claire, WI 54701\\
        \texttt{sh383@njit.edu,beemera@uwec.edu,jkliewer@njit.edu}

}

\maketitle

\begin{abstract} 
\vspace{-0cm}
We show in this work that reinforcement learning can be successfully
applied to decoding short to moderate length sparse graph-based channel
codes. Specifically, we focus on low-density parity check (LDPC)
codes, which for example have been standardized in the context of 5G
cellular communication systems due to their excellent error correcting performance. These codes are typically decoded via belief propagation
iterative decoding on the corresponding bipartite (Tanner) graph of
the code via flooding, i.e., all check and variable nodes in the
Tanner graph are updated at once. In contrast, in this paper we
utilize a sequential update policy which selects the optimum check
node (CN) scheduling in order to improve decoding performance. In
particular, we model the CN update process as a multi-armed bandit
process with dependent arms and employ a Q-learning scheme for
optimizing the CN scheduling policy. In order to reduce the learning
complexity, we propose a novel graph-induced CN clustering approach to
partition the state space in
such a way that dependencies between clusters are minimized. Our
results show that compared to other decoding approaches from the
literature, the proposed reinforcement learning scheme not only significantly
improves the decoding performance, but also reduces the decoding complexity dramatically once the scheduling policy is learned.
	
\end{abstract}


\section{Introduction}
\label{sec:Intro}
Binary low-density parity-check (LDPC) codes  \cite{Gal62} are sparse graph-based channel codes that have recently been standardized  for data communication in the
5G cellular new radio standard due to their excellent error correcting performance  \cite{CFRU01,KLF01}. In practice, LDPC codes are iteratively decoded via
\emph{belief propagation} (BP), an algorithm that operates on the Tanner graph
of the code \cite{Tan81}. Tanner graphs of LDPC codes are sparse bipartite graphs whose vertex sets are partitioned into two types: check nodes (CNs) and variable nodes (VNs). Typically, iterative decoding on a Tanner graph is carried out via flooding, i.e., all CNs and VNs are updated simultaneously \cite{FMI99}. However, in \cite{CGW10}, a sequential CN scheduling scheme, so-called node-wise scheduling (NS), was proposed. In this scheme, messages are sent from a single CN at a time, and successive CNs are scheduled based on their residuals, i.e., the magnitude of the difference between subsequent messages emanating from the CN. 


NS of an iterative decoder can lead to improved performance, as was shown in \cite{CGW10}. The implementation relies on the intuition that in loopy BP, the higher the residual of a CN, the further away that portion of the graph is from convergence. Hence, scheduling CNs with higher residuals is expected to lead to faster decoder convergence. Although the NS method converges faster than the flooding scheme, the computation of provisional messages for updating the CN residuals in real-time is required, rendering it more computationally intensive than flooding for the same total number of messages propagated. 

To mitigate the computational complexity inherent in NS, we propose a multi-armed bandit-based NS (MAB-NS) scheme for sequential iterative decoding of short LDPC codes. \hspace{-0.15cm}\footnote{Note that, in recent literature such as \cite{lattimore,Zhou17}, MAB refers to a problem where actions do not change the state of the environment. However, in this paper, we refer to MAB as the traditional bandit problem discussed in \cite{Duff95,Git79} where an agent's action does indeed change the state of the environment.} Rather than computing residuals, MAB-NS evaluates an action-value function prior to scheduling, which determines how beneficial an action is for maximizing convergence speed and performance of the decoder. An action is defined here as selecting a single CN to convey its outgoing messages to its adjacent variable nodes. The NS algorithm is modeled as a Markov decision process (MDP) \cite{Sutton15}, where the Tanner graph is viewed as an $m$-armed slot machine with $m$ CNs (arms), and an agent learns to schedule CNs that elicit the highest reward. Repeated scheduling in the learning phase enables the agent to estimate the action-value function, where the optimal scheduling order is the one that yields a codeword output by propagating the smallest number of CN to VN messages. 

Neural network assisted decoding of linear codes codes has been addressed in, e.g., \cite{Nachmani_etal2018,KJKOV18,Jiang_etal19,Beery_etal2020}, aiming to learn the noise on the communication channel. However, these methods have not been applied to sparse graph-based codes and, due to their high learning complexity, are suited only for codes with short block lengths. Reinforcement learning (RL) has been applied recently for hard decision based iterative decoding in \cite{CHMRP19}. Further, a deep learning framework based on hyper-networks was used for BP decoding of short block length LDPC codes in \cite{NW19}, where the hidden layers of the network unfold to represent factor graphs executing successive message passing iterations. However, to the best of our knowledge, reinforcement learning (RL) has not been successfully applied to iterative decoding of LDPC codes in the open literature so far. 

The main ingredient for the proposed RL decoder is Q-learning \cite{Watkins89,WatkinsDayan1992}. Q-learning is a Monte Carlo approach for estimating the action-value function without explicit assumptions on the distribution of the bandit processes \cite{Duff95}.
However, a major drawback of applying Q-learning to the sequential CN scheduling problem at hand is that the learning complexity grows exponentially with the number of CNs. A straightforward way to select the underlying state space of the Q-learning problem would be to consider the vector of quantized CN values. However, for practical LDPC codes, the number of CNs ranges in the hundreds, and even for a binary quantization of each CN value the cardinality of the state space is not computationally manageable in the learning process. A multitude of methods for reducing the learning complexity in RL have been proposed in the literature: for example, complexity may be reduced by partitioning the state space (see, e.g., \cite{csimcsek2005identifying,mannor2004dynamic}), imposing a state hierarchy (see, e.g., \cite{parr1998reinforcement}), or by dimensionality reduction (see, e.g., \cite{BitzerHowardVijayakumar2010, ShahXie2018}). 

In this work, we follow a similar avenue, albeit tailored to the problem at hand. In extension of our previous work \cite{HBJ20}, for Q-learning we propose grouping the CNs into clusters, each with a separate state and action space. While this approach has the potential to make learning tractable, it also assumes independence of the clusters; this assumption cannot hold due to the presence of cycles in the Tanner graph. In order to mitigate the detrimental effect of clustering on the learning performance, we propose to leverage the structure of the Tanner graph of the code. In particular, we aim to optimize the clusters in such a way that dependencies between them are minimized. Note that as NS follows a fixed greedy schedule, there exists a non-zero probability that initially correct, but unreliable, bits are wrongly corrected into an error that is propagated in subsequent iterations. In contrast, our proposed scheme based on Q-learning allows some room for exploration by scheduling the CN with the highest expected \emph{long-term} residual, mitigating such a potential error propagation.

To this end, we define novel graphical substructures in the Tanner graph termed cluster connecting sets, which capture the connectivity between CN clusters, and analyze their graph-theoretic properties. Numerical results show that RL in combination with sequential scheduling provides a significant improvement in the performance of LDPC decoders for short codes, superior to existing approaches in the literature. 




\section{Preliminaries}

\label{sec:prel}


\subsection{Low-density parity-check codes}

\label{sec:alg_lft}
An {$[n,k]$ binary linear code} is a $k$-dimensional subspace of $\mathbb{F}_{2}^{n}$, and may be defined as the kernel of a (non-unique) binary \emph{parity-check matrix} $\mathbf{H}\in \mathbb{F}_2^{m\times n}$, where $m\geq n-k$. The \emph{Tanner graph} of a linear code with parity-check matrix $\mathbf{H}$ is the bipartite graph $G_{\mathbf{H}}=(V\cup C,E)$, where $V=\{v_0,\ldots,v_{n-1}\}$ is a set of variable nodes (VNs) corresponding to the columns of $\mathbf{H}$, $C=\{c_0,\ldots,c_{m-1}\}$ is a set of check nodes (CNs) corresponding to the rows of $H$, and edges in $E$ correspond to the $1$'s in $\mathbf{H}$ \cite{Tan81} (see an example in Fig.~1 below). That is, $\mathbf{H}$ is the (simplified) adjacency matrix of $G_{\mathbf{H}}$. LDPC codes are a class of highly competitive linear codes defined via sparse parity-check matrices or, equivalently, sparse Tanner graphs \cite{Gal62}. 
Due to this sparsity, LDPC codes are amenable to low-complexity graph-based message-passing decoding algorithms, making them ideal for practical applications. BP iterative decoding, considered here, is one such algorithm.

In this work, we present experimental results for two particular classes of
LDPC codes: $(j,k)$-regular and array-based (AB-) LDPC codes. A
$(j,k)$-regular LDPC code is defined by a parity-check matrix with constant
column and row weights equal to $j$ and $k$, respectively \cite{Gal62}. A
$(\gamma,p)$ AB-LDPC code, where $p$ is prime, is a $(\gamma,p)$-regular
LDPC code with additional structure in its parity-check matrix,
$\mathbf{H}(\gamma,p)$ \cite{Fan00}. In particular, 

\begin{equation}
\label{eq:mat}
\mathbf{H}(\gamma,p)=
\begin{bmatrix}
\mathbf{I} & \mathbf{I} & \mathbf{I} & \cdots & \mathbf{I} \\
\mathbf{I} & \sigma & \sigma^{2} & \cdots & \sigma^{p-1} \\
\vdots & \vdots & \vdots & \cdots & \vdots \\
\mathbf{I} & \sigma^{\gamma-1} & \sigma^{2(\gamma-1)} & \cdots & \sigma^{(\gamma-1)(p-1)}
\end{bmatrix},
\end{equation}

\noindent where $\sigma^z$ denotes the circulant matrix obtained by cyclically left-shifting the entries of the $p\times p$ identity matrix $\mathbf{I}$ by $z$ (mod $p$) positions. Notice that $\sigma^0=\mathbf{I}$. Each row (resp.~column) of sub-matrices of $\mathbf{H}(\gamma,p)$ forms a \emph{row} (resp.~\emph{column}) \emph{group}. Observe that there are a total of $p$ (resp.~$p^2$) column groups (resp.~columns) and $\gamma$ (resp.~$\gamma p$) row groups (resp.~rows) in $\mathbf{H}(\gamma,p)$.

\subsection{Multi-armed bandits}

The MAB problem is a special RL problem where in each time step, a gambler must decide which arm of an $m$-armed slot machine to pull in order to maximize the total reward in a series of pulls. We model the $m$-armed slot machine as a MDP. In an MDP, a learner must decide which action to take at each time step by observing only the current state of its environment. This decision-making process leads to a future state and expected reward that are contingent only on the current state and action. For finite state and action spaces, the process is called a finite MDP \cite{Sutton15}, and the optimized scheduling of arms (CNs in our case) is obtained by solving the MAB problem.

In the remainder of the paper, let $[[x]]\triangleq \{0,\ldots,x-1\}$, where $x$ is a positive integer. In an $m$-armed bandit problem, let $S_{t}^{(0)},\ldots,S_{t}^{(m-1)}$ represent $m$ bandit processes (arms) at time $t$, where each random variable (r.v.) $S_{t}^{(j)}$, $j\in [[m]]$, can take $M$ possible real values. Let a state space $\mathcal{S}^{(M)}$ contain all $M^m$ possible realizations of the sequence $S_{t}^{(0)},\ldots,S_{t}^{(m-1)}$, and let the r.v.~$S_{t}$, with realization $s\in [[M^m]]$ represent the index of realization $s_t^{(0)},\ldots,s_t^{(m-1)}$. Since each index corresponds to a unique realization, we also refer to $S_{t}$ as the {state} of the $m$-armed slot machine at time $t$. If the arms are modeled as independent bandit processes, we define a r.v.~$\hat{S}$, with realization $\hat{s}\in [[M^m]]$, as the realization $s_t^{(j)}$ of a particular arm $j$. Let $A_t$ represent an action, with realization $a\in \mathcal{A}:=[[m]]$, indicating the index of an arm that has been pulled by the gambler at time $t$. Let $S_{t+1}$ represent the new state of the MDP after pulling arm $A_t$, and let $s'$ denote its realization. Also, let a r.v. $R_t(S_{t},A_t,S_{t+1})$, with realization $r$ be the reward yielded at time $t$ after playing arm $A_t$, in state $S_{t}$ that yields state $S_{t+1}$. 

\subsection{MAB-NS scheme}
\label{sec:MAB-NS}

In the following, we propose a MAB-NS scheme as a sequential BP decoding algorithm. Specifically, a single message passing iteration is given by messages sent from a single CN to all its neighboring VNs, and subsequent messages sent from these VNs to their remaining CN neighbors. Sequential CN scheduling is carried out until a stopping condition is reached, or an iteration threshold is exceeded. The MAB-NS decoder applies a scheduling policy, based on an action-value function, avoiding the computationally costly real-time calculation of residuals. 

The NS algorithm informs an imaginary agent of the current state of the decoder and the reward obtained after performing an action (scheduling a CN). Based on these observations, the agent takes future actions, to enhance the total reward earned, which alters the state of the environment as well as the future reward. In MAB-NS, the reward $R_a$ obtained by the agent after scheduling CN $a$ is defined by $R_a=\max_{v\in \mathcal{N}_V(a)} r_{a\rightarrow v}$, where the residual $r_{a\rightarrow v}$ is computed according to 

\begin{equation}
	\label{eq:res}
	r_{a\rightarrow v}\triangleq|m_{a\rightarrow v}'-m_{a\rightarrow v}|.
\end{equation}
      
Here, $m_{a\rightarrow v}$ is the message sent by CN $a$ to its neighboring VN $v$ in the previous iteration, and $m_{a\rightarrow v}'$ is the message that CN $a$ would send to VN $v$ in the current iteration, if scheduled. 

In the proposed MAB problem, the decoder iteration $\ell$ is analogous to a time step $t$. Let $\mathbf{x}=[x_0,\ldots,x_{n-1}]$ and $\mathbf{y}=[y_0,\ldots,y_{n-1}]$ represent the transmitted and the received codeword, resp., where $x_i\in \{0,1\}$, and $y_i=(-1)^{x_i}+z$ with $z\sim \mathcal{N}(0,\sigma^2)$. Let $\hat{\mathbf{{x}}}\in \mathbb{F}_2^n$ be the reconstructed codeword by the decoder.  The posterior log-likelihood ratio (LLR) of $x_i$ is expressed as $L_i=\log \frac{\Pr(x_i=1|y_i)}{\Pr(x_i=0|y_i)}$. The  soft channel information input to the MAB-NS algorithm is a vector $\mathbf{L}=[L_0,\ldots,L_{n-1}]$ comprised of LLRs. In the MAB-NS scheme, the state of CN $j$ at the end of iteration $\ell$ is given by $\hat{s}_\ell^{(j)}=\sum_{i=0}^{n-1}H_{j,i}\hat{L}_\ell^{(i)}$, where $\hat{L}_\ell^{(i)}=\sum_{c\in \mathcal{N}(v_i)} m_{c\rightarrow v_i}+L_i$ is the posterior LLR computed by VN $v_i$ at the end of iteration $\ell$, and $m_{c\rightarrow v_i}$ is the message received by VN $v_i$ from neighboring CN $c$. Let $\hat{\mathbf{S}}_{\ell}=\hat{S}_\ell^{(0)},\ldots,\hat{S}_\ell^{(m-1)}$, with realization $\hat{\mathbf{s}}_{\ell}=\hat{s}_\ell^{(0)},\ldots,\hat{s}_\ell^{(m-1)}$, represent a soft syndrome vector of the MAB-NS scheme obtained at the end of iteration $\ell$. This soft syndrome vector represents the state of the decoder in each iteration.

It is necessary to quantize each CN state in order to obtain a finite-cardinality state space. Let $g_M(\cdot)$ denote an $M$-level scalar quantization function that maps a real number to any of the closest $M$ possible representation points set by the quantizer. Then,  $\mathbf{S}_{\ell}=S_{\ell}^{(0)},\ldots,S_{\ell}^{(m-1)}$ is the quantized syndrome vector, where a realization is given as $s_{\ell}^{(j)}=g_M(\hat{s}_\ell^{(j)})$.


The proposed MAB-NS scheme is shown in Algorithm~1, omitted here for space reasons, but stated in the supplementary material Section S.2. The algorithm inputs are a soft channel information vector $\mathbf{L}$ and a parity-check  matrix $\mathbf{H}$; the output is the estimated (decoded) codeword $\hat{\mathbf{{x}}}$. Note that the time complexity for selecting a CN in line 12 of Algorithm 1 grows linearly with the total number of CNs, as opposed to being zero for flooding BP.

We also note that the MAB-NS algorithm is dynamic and depends both on the graph structure and on received channel values: thus, the scheduling order adapts with subsequent transmissions, and will outperform a scheduling order which is fixed in advance.

\subsection{Solving the MAB problem using clustered Q-learning}
\label{sec:MABNS_Q}

Q-learning is an adaptive algorithm for computing optimal policies for MDPs that does not rely on explicit assumptions on the distribution of the bandit processes. As discussed above, in a traditional Q-learning approach, the state space observed by the agent grows exponentially with the number of arms, $m$. In order to reduce both the state space and the learning complexity, we propose a novel clustering strategy in the following. A cluster is defined as a set of arms with its own state and action spaces. Let $z$ represent the size of (number of arms in) a cluster. The state of a cluster with index $u\in[[\ceil{\frac{m}{z}}]]$ at iteration $\ell$ is a sub-syndrome $\mathbf{S}_\ell^{(u,z)}=S_{\ell}^{(uz)},\ldots,S_{\ell}^{(uz+z-1)}$ of the syndrome $\mathbf{S}_\ell=S_{\ell}^{(0)},\ldots,S_{\ell}^{(m-1)}$, with a state space $\mathcal{S}_u^{(M)}$ containing all possible $M^z$ sub-syndromes $\mathbf{S}_\ell^{(u,z)}$. Hence, the total number of states that may be observed by the agent is upper-bounded by $\ceil{\frac{m}{z}}|\mathcal{S}_u^{(M)}|$. Notice that as long as $z\ll m$, $|\mathcal{S}_u^{(M)}|\ll|\mathcal{S}^{(M)}|$. The action space of cluster $u$ is defined as $\mathcal{A}_u=[[z]]$. The action-value function for cluster $u$ in the $(\ell+1)$st iteration is given by
\begin{equation}
	Q_{\ell+1}(s_u,a_u)= (1-\alpha)Q_\ell(s_u,a_u)+\alpha \Bigl(R_\ell(s_u,a_u,f(s_u,a_u))+\beta \max_{u',a_{u'}}Q_\ell(f(s_{u'},a_{u'}),a_{u'}) \Bigr),
	\label{eq:q_cls2}
\end{equation}
 where $s_u\in [[M^z]]$ and $a_u\in \mathcal{A}_u$ are the state and action indices, respectively, for cluster $u$, $f(s_u,a_u)$ represents the new cluster state $s_u'\in [[M^z]]$ as a function of $s_u$ and $a_u$, $R_\ell(s_u,a_u,s_u')$ is the reward obtained after taking action $a_u$ in state $s_u$, and $0\leq \alpha \leq 1$ is the learning rate. In clustered Q-learning, the action $a_u$ in time step $\ell$ is selected via an $\epsilon$-greedy approach according to 
 \begin{equation}
	\label{eq:egreedy}
	a_u=
	\begin{cases}
     u \text{ and } \mathcal{A}_u 	\text{ selected uniformly at random w.p. } \epsilon, \\
	 \pi_Q^{(\ell)} \text{ selected w.p. } 1-\epsilon, 
	\end{cases}
\end{equation}
where $\pi_Q^{(\ell)}=\argmax_{a_u \text{ s.t. }u\in [[\ceil{\frac{m}{z}}]]} Q_\ell(s_u,a_u)$. The action-value function is recursively updated $\ell_{\max}$ times according to (\ref{eq:q_cls2}), where $\ell_{\max}$ is the maximum number of times a CN is scheduled. 
The detailed proposed clustered Q-learning scheme is given in Algorithm 2 in the supplementary material Section S.3.

The input to Algorithm~2 is a set $\mathscr{L}=\{\mathbf{L}_0,\ldots,\mathbf{L}_{|\mathscr{L}|-1}\}$ containing $|\mathscr{L}|$ realizations of $\mathbf{L}$ over which clustered Q-learning is performed, and a parity-check matrix $\mathbf{H}$. This algorithm trains an agent  to learn the optimum CN scheduling policy for a given $\mathbf{H}$. In each optimization step $\ell$, the agent performs NS for a given $\mathbf{L}$. As a result, clustered Q-learning can be viewed as a recursive Monte Carlo estimation approach with $\mathbf{L}$ as the source of randomness.

\vspace{-1ex}
\section{Cluster-connecting Sets}
\label{sec:acs}

In clustered Q-learning, the total number of states observed by the agent is upper-bounded by $\ceil{\frac{m}{z}}|\mathcal{S}_u^{(M)}|\ll |\mathcal{S}^{(M)}|$. 
Note that the learning complexity in clustered Q-learning is $\mathcal{O}(z^M)$, whereas in standard Q-learning the complexity scales as $\mathcal{O}(m^M)$, with the cluster size $z\ll m$. Consequently, clustering enables a tractable RL task for sequential scheduling of short LDPC codes.

On the other hand, the larger the size of the cluster, the greater the ability of the agent to take into account any dependencies between CN log-likelihood ratios (LLRs). Hence, there exists a trade-off between the effectiveness of clustered Q-learning and the learning complexity. Indeed, cluster size should be large enough to accommodate the dependencies between CN messages as much as possible, without being so large that learning is infeasible. We note that clustered Q-learning serves as an approximation for standard Q-learning, but suffers from a performance loss due to the existence of dependencies between CN messages emanating from separate clusters. 

To better understand this loss in performance, we introduce a critical sub-structure in the Tanner graph $G_{\mathbf{H}}$ of an LDPC code. Let $C_u=\{c_1,\ldots,c_z\}$ denote the set of CNs belonging to the cluster of index $u$, and let $\bar{C}_u=C\setminus C_u$ be the remaining CNs in the  Tanner graph. Let $\mathcal{N}_V(C_u)$ be the set of all neighbors of $C_u$ in $V$. For $W\subseteq \mathcal{N}_V(C_u)$, let $\mathcal{N}_C(W)$ be the set of all neighbors of $W$ in $C$. 

\begin{definition}
Fix a CN cluster $C_u$, and let $W\subseteq \mathcal{N}_V(C_u)$. We say that $W$ is the cluster-connecting set (CCS) corresponding to $C_u$ if the following holds: $v\in W$ if and only if $\mathcal{N}_{C_u}(v)$ and $\mathcal{N}_{\bar{C}_u}(v)$ are both nonempty. If $|W|=A$, and $|\mathcal{N}_{C_{u}}(W)|=B$, we say that $W$ is an $(A,B)$ CCS.
\end{definition}

\noindent CCSs induce dependencies between the messages generated by the CNs in $C_u$ and those in $\bar{C}_u$. Roughly speaking, the number of edges in a Tanner graph that connect $C_u$ to $\bar{C}_u$ grows with the size of $C_{u}$'s CCS. That is, on average, the larger the size of the CCS corresponding to $C_u$, the greater the dependence between $C_u$ and $\bar{C}_u$. This holds absolutely for Tanner graphs in which all VNs have the same degree.

\begin{figure}[h]
  \centering
  \includegraphics[scale=1.38]{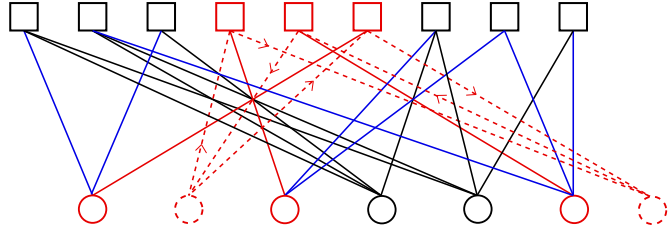}
  \vspace{-0cm}
  \caption{Depiction of the $(3,3)$ CCS (shown using solid red circles) corresponding to cluster $C_u$ (red squares). The dashed edges connect $C_u$ to neighboring VNs (dashed red) which do not belong to the CCS. Arrows highlight the cycles incident to the cluster. } 
  \label{fig:acs}
\end{figure}

Fig. \ref{fig:acs} depicts an example of a Tanner graph $G_{\mathbf{H}}$ containing a cluster of CNs whose corresponding CCS, $W$, is a $(3,3)$ CCS. 
Note that in this example there are no cycles in the subgraph induced by $W\cup \mathcal{N}_{C_u}(W)$. However, there are multiple 4-cycles in the subgraph induced by $(\mathcal{N}_V(C_u)\setminus W) \cup C_u$. 

In the remainder of the paper, we assume all Tanner graphs are connected and contain cycles. We first show that any choice of cluster that is a proper subset of the CNs has a nonempty CCS.

\begin{prop}
Consider a Tanner graph with $m$ CNs, and choose a cluster $C_u$ with $z\neq m$ CNs. Then, the CCS corresponding to $C_{u}$ is nonempty.
\end{prop}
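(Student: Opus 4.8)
The plan is to argue by contradiction using the standing assumption that the Tanner graph is connected and contains cycles. Suppose $C_u$ is a cluster with $z \neq m$ CNs whose corresponding CCS $W$ is empty. By the definition of a CCS, $W = \emptyset$ means that for every VN $v \in \mathcal{N}_V(C_u)$, at least one of $\mathcal{N}_{C_u}(v)$ or $\mathcal{N}_{\bar C_u}(v)$ is empty. But $v \in \mathcal{N}_V(C_u)$ already forces $\mathcal{N}_{C_u}(v) \neq \emptyset$, so the condition reduces to: every neighbor of $C_u$ in $V$ has \emph{no} neighbor in $\bar C_u$. In other words, $\mathcal{N}_V(C_u)$ and $\mathcal{N}_V(\bar C_u)$ are disjoint.

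The second step is to turn this disjointness into a disconnection of $G_{\mathbf H}$. Since $z \neq m$, the set $\bar C_u = C \setminus C_u$ is nonempty, so there is at least one CN in $\bar C_u$; assuming (as is implicit for a Tanner graph of a code with no all-zero rows) that every CN has at least one incident edge, $\mathcal{N}_V(\bar C_u)$ is nonempty as well, and likewise $\mathcal{N}_V(C_u)$ is nonempty since $C_u$ is a valid cluster. Consider the vertex set $U = C_u \cup \mathcal{N}_V(C_u)$. No edge leaves $U$: any edge incident to a CN in $C_u$ lands in $\mathcal{N}_V(C_u) \subseteq U$, and any edge incident to a VN $v \in \mathcal{N}_V(C_u)$ goes to a CN in $\mathcal{N}_C(v)$, which by the previous paragraph lies entirely in $C_u \subseteq U$. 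Thus $U$ is a nonempty union of connected components of $G_{\mathbf H}$, and its complement contains $\bar C_u \cup \mathcal{N}_V(\bar C_u)$, which is also nonempty. Hence $G_{\mathbf H}$ is disconnected, contradicting the standing assumption that all Tanner graphs are connected. Therefore $W \neq \emptyset$.

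The only subtle point — and the step I expect to require the most care in the writeup — is the edge case that some VN has no neighbors at all, or that $\bar C_u$ could consist entirely of degree-zero CNs, which would make $\mathcal{N}_V(\bar C_u)$ empty and break the disconnection argument. For Tanner graphs of LDPC codes this does not occur: a zero column of $\mathbf H$ would correspond to an uncoded bit and a zero row to a trivial parity check, and in any case a graph with an isolated vertex is not connected, so the standing connectedness assumption already rules these out. I would state this explicitly at the start of the argument (every VN and CN has degree at least one, as part of $G_{\mathbf H}$ being connected with more than one vertex), after which the contradiction goes through cleanly. One could alternatively phrase the whole argument without contradiction: pick any cycle in $G_{\mathbf H}$ (which exists by assumption) — if the cycle meets both $C_u$ and $\bar C_u$, walking along it produces a VN with neighbors in each, hence a CCS element; and connectedness guarantees that if the cycle lies entirely on one side, there is still a path from $C_u$ to $\bar C_u$, whose first VN transitioning between the two sides lies in $W$. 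I would present the contradiction version as the main proof since it is shortest.
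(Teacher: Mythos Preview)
Your proof is correct and follows essentially the same approach as the paper: assume the CCS is empty, deduce that every VN in $\mathcal{N}_V(C_u)$ has all its CN neighbors in $C_u$, and conclude that $G_{\mathbf{H}}$ is disconnected, contradicting the standing connectedness assumption. Your version is more detailed about the edge cases (nonemptiness of $\bar C_u$ and its neighborhood, degree-zero vertices), but the core argument is identical.
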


\begin{proof}
Consider the set of VNs given by $\mathcal{N}_V(C_{u})$. If every variable node in this set is only adjacent to CNs in $C_u$, then the Tanner graph is not connected, a contradiction to our assumption of connectedness. We conclude that the CCS corresponding to $C_u$ has at least one element.
\end{proof}

Recall that our experimental results will focus on $(j,k)$-regular and AB-LDPC codes. In the next section, we will be interested in optimizing clusters based on the number of edges incident to the corresponding CCSs. However, since $(j,k)$-regular and AB-LDPC codes have regular VN degree, examining the size of a CCS for these classes is equivalent to examining the number of edges adjacent to a CCS. We present below results on the size of a CCS for each class of LDPC codes.

\begin{thm}
\label{thm:w-range-given-v}
Let $j,k \geq 2$, and $z\geq 1$ be integers. Suppose $G$ is a $(j,k)$-regular Tanner graph, and let $C_{u}$ be a cluster of CNs of size $|C_{u}|=z$ in $G$. If $|\mathcal{N}_{V}(C_{u})|=v$, then the number of variable nodes in $W$, the CCS corresponding to $C_{u}$, is bounded as follows.
\[v- \left\lfloor\frac{kz}{j}\right\rfloor \leq |W| 
\leq \min\left\{jv-kz, v \right\}.\]
\end{thm}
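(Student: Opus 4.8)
The plan is to prove both inequalities by double-counting the edges incident to the variable-node set $\mathcal{N}_V(C_u)$, sorting them according to whether their check-node endpoint lies in $C_u$ or in $\bar{C}_u$, and then invoking $(j,k)$-regularity. The key preliminary observation is that, by the definition of $\mathcal{N}_V(C_u)$, every edge leaving a check node of $C_u$ lands in $\mathcal{N}_V(C_u)$; since $|C_u|=z$ and each check node has degree $k$, there are exactly $kz$ edges between $C_u$ and $\mathcal{N}_V(C_u)$. Because each of the $v$ variable nodes in $\mathcal{N}_V(C_u)$ has degree $j$, the number of edges from $\mathcal{N}_V(C_u)$ to $\bar{C}_u$ is exactly $jv-kz$; in particular $jv-kz\geq 0$, so the quantity appearing in the upper bound is meaningful.

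For the lower bound, I would split $\mathcal{N}_V(C_u)$ into $W$ and $\mathcal{N}_V(C_u)\setminus W$ and read off from Definition~1 that a variable node in the latter set --- being in $\mathcal{N}_V(C_u)$ but not in the CCS --- must have \emph{all} $j$ of its neighbors in $C_u$. Hence $j\,|\mathcal{N}_V(C_u)\setminus W|\leq kz$, and since the left-hand count is an integer this gives $|\mathcal{N}_V(C_u)\setminus W|\leq \lfloor kz/j\rfloor$; subtracting from $v=|\mathcal{N}_V(C_u)|$ yields $|W|\geq v-\lfloor kz/j\rfloor$.

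For the upper bound I would instead count edges toward $\bar{C}_u$: every variable node of $W$ contributes at least one edge to $\bar{C}_u$, while variable nodes of $\mathcal{N}_V(C_u)\setminus W$ contribute none, so $|W|$ is at most the total number $jv-kz$ of such edges. Combining this with the trivial inequality $|W|\leq |\mathcal{N}_V(C_u)|=v$ gives $|W|\leq \min\{jv-kz,\,v\}$ (the case of a single check node, $z=1$ and $v=k$, shows that the second term can be the binding one), which completes the proof.

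I do not anticipate a deep obstacle here --- the argument is a careful application of a handshaking count. The one point requiring care is the logical step in the lower bound: one must use that a node in $\mathcal{N}_V(C_u)\setminus W$ is adjacent to $C_u$ yet, failing the CCS criterion, has \emph{no} neighbor in $\bar{C}_u$, hence lies entirely inside $C_u$'s neighborhood; and one must not drop the floor, since the raw edge inequality only yields the weaker bound $v-kz/j$.
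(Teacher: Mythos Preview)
Your proof is correct and follows essentially the same edge-counting argument as the paper: both establish that there are exactly $jv-kz$ edges from $\mathcal{N}_V(C_u)$ to $\bar{C}_u$, then bound $|W|$ above by spreading these edges and below by concentrating them. Your phrasing of the lower bound (bounding $|\mathcal{N}_V(C_u)\setminus W|$ via the $kz$ edges into $C_u$) is the complementary side of the paper's count (bounding $|W|$ via the $jv-kz$ exiting edges), and both yield $v-\lfloor kz/j\rfloor$.
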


\begin{proof}
It is straightforward to see that there are $jv-kz$ edges in $G$ that are incident to $\mathcal{N}_{V}(C_{u})$ but not to $C_{u}$. In other words, there are $jv-kz$ edges exiting the subgraph induced by $C_{u}\cup \mathcal{N}_{V}(C_{u})$. 

Consider the CCS $W$, the subset of $\mathcal{N}_{V}(C_{u})$ comprised of all variable nodes incident to an exiting edge. The minimum size of $W$ corresponds to the case in which the exiting edges are concentrated at a few variable nodes in $\mathcal{N}_{V}(C_{u})$. That is,
$|W|\geq \left\lceil (jv-kz)/j\right\rceil
= \left\lceil v- (kz/j)\right\rceil
=  v- \left\lfloor kz/j \right\rfloor$.
On the other hand, the maximum size of $W$ corresponds to the case where the exiting edges are spread across as many variable nodes as possible:
$|W|\leq \min\left\{jv-kz, v \right\}$.

\end{proof}

\begin{rem}
\label{rem:thm2}
For fixed $j,k,$ and $z$, the upper and lower bounds given in Theorem \ref{thm:w-range-given-v} are increasing functions of $v$. In other words, the more neighbors a cluster has, the larger the maximum possible size of $W$ (and resulting $j|W|$ CCS edges), and the larger the minimum possible size of $W$ (and $j|W|$). It is important to note that we are not claiming that a higher number of neighbors necessitates a larger CCS. Rather, the shifting window of possible CCS sizes suggests a trend, even if a strict increase does not hold universally.
\end{rem}

Because they are $(3,p)$-regular, Theorem \ref{thm:w-range-given-v} gives bounds on the size of CCSs in $(3,p)$ AB-LDPC codes as well. However, given the added structure of an AB-LDPC code Tanner graph, we may conclude more, as shown in the following theorem, whose proof is provided in the supplementary material Section S.1.

\begin{thm}
\label{theorem:AB-CCS-bounds}
Consider an AB-LDPC code defined by a parity-check matrix $\mathbf{H}(3,p)$, let $C_u$ be a cluster of size $1\leq z\leq p$, and let $W$ be the CCS corresponding to $C_{u}$.

\begin{enumthm}[leftmargin=*,align=left]
\vspace{-0.25cm}
 \item \label{prop:lb} 
 $|W|\geq ((1+2p)z-z^2)/4$.
\vspace{-0.25cm}
 \item \label{prop:contgCNs_AB} If all CNs in $C_u$ belong to the same row group of $\mathbf{H}(3,p)$, $|W|=zp$. 
 \vspace{-0.25cm}
\item \label{prop:cycs_CNs_AB} If $3\leq z < p$ and every CN in $C_u$ belongs to at least one 6-cycle such that the other two CNs in the 6-cycle also belong to $C_u$,  $|W|\leq zp-z$.

\end{enumthm}
\end{thm}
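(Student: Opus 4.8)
The plan is to recast the statement entirely in terms of the array structure of $\mathbf{H}(3,p)$. I index the rows of $\mathbf{H}(3,p)$ by pairs $(g,r)$ with $g\in\{0,1,2\}$ the row group and $r\in\mathbb{Z}_p$ the offset, and the columns (VNs) by pairs $(a,b)\in\mathbb{Z}_p\times\mathbb{Z}_p$. The first step is a structural lemma about $G_{\mathbf{H}}$: using the circulant blocks $\sigma^{ga}$, the VN $(a,b)$ is adjacent to exactly one CN in each row group, namely $(0,b),(1,b+a),(2,b+2a)$, whose offsets form a $3$-term arithmetic progression in $\mathbb{Z}_p$; dually, two CNs in the same row group share no common VN-neighbour, while two CNs in distinct row groups share exactly one, since for $g\neq g'$ in $\{0,1,2\}$ the difference $g-g'$ is a unit modulo the prime $p$ (as $1\le|g-g'|\le 2<p$), so the linear equation determining the common column has a unique solution.

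Given this, I would set up a double count. Let $z_g$ be the number of CNs of $C_u$ in row group $g$, so $z_0+z_1+z_2=z$, and for $v\in\mathcal{N}_V(C_u)$ write $d(v)=|\mathcal{N}_{C_u}(v)|\in\{1,2,3\}$; since every VN has degree $3$, $v$ has a neighbour outside $C_u$ exactly when $d(v)<3$, so $W=\{v\in\mathcal{N}_V(C_u):d(v)\le 2\}$. Counting the edges leaving $C_u$ gives $\sum_v d(v)=zp$, and counting incidences between VNs and the pairs of $C_u$-neighbours they share gives $\sum_v\binom{d(v)}{2}=P$, where $P$ is the number of unordered pairs of CNs of $C_u$ in distinct row groups; by the structural lemma each such pair is realised by exactly one VN, so $P=z_0z_1+z_1z_2+z_2z_0$. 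Eliminating $n_1,n_2,n_3$ (the numbers of VNs with $d(v)=1,2,3$) yields the master identity
\[|W| \;=\; zp-P \;=\; zp-(z_0z_1+z_1z_2+z_2z_0).\]

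From here the three claims drop out. For (ii): if all CNs of $C_u$ lie in one row group then two of $z_0,z_1,z_2$ vanish, so $P=0$ and $|W|=zp$. For (iii): a (vertex-simple) $6$-cycle alternates three distinct CNs and three distinct VNs; each consecutive CN-pair shares a VN and hence lies in distinct row groups, so the three CNs of the cycle occupy all three row groups, and the hypothesis gives every CN of $C_u$ at least two cluster-mates lying in the other two row groups, i.e. at least two of the pairs counted by $P$; summing over the $z$ CNs, $2P\ge 2z$, so $|W|=zp-P\le zp-z$. For (i): subject to $z_0+z_1+z_2=z$ we have $2P=z^2-(z_0^2+z_1^2+z_2^2)\le z^2-z^2/3$ by power-mean, hence $P\le z^2/3$ and $|W|\ge zp-z^2/3$; this is at least $((1+2p)z-z^2)/4$ because the difference equals $z(6p-3-z)/12$, which is nonnegative for $1\le z\le p$ since $6p-3-z\ge 5p-3>0$.

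I expect the only genuine work to be the structural lemma of the first paragraph: carefully determining, via the circulant structure of $\mathbf{H}(3,p)$ and the primality of $p$, exactly how many VNs two given CNs can share, split by whether or not they lie in the same row group. Everything afterwards is the single master identity plus three short deductions. Two bookkeeping points I would be careful about are (a) interpreting ``$6$-cycle'' in (iii) as a genuine cycle, so that its three CNs really are distinct and the row-group argument applies, and (b) noting that the hypotheses of (ii) and (iii) cannot hold at once (a cluster contained in one row group has no $6$-cycle), so the conclusions $|W|=zp$ and $|W|\le zp-z$ are never forced on the same cluster.
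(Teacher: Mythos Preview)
Your proof is correct and takes a genuinely different, more unified route than the paper. The paper handles the three parts separately with ad~hoc counting: for (i) it fixes a CN $c\in C_u$, uses the absence of $4$-cycles to bound by $(z-1)/2$ the number of its $p$ VN-neighbours lying entirely inside $C_u$, then sums over $c$ and halves for double-counting; for (ii) it argues directly that CNs in one row group share no VNs; and for (iii) it analyses only the special configuration where the CNs decompose into $z/3$ disjoint $6$-cycle triples and asserts this is the ``worst case'' without full justification. Your approach instead proves the exact identity $|W|=zp-(z_0z_1+z_1z_2+z_2z_0)$ via the structural lemma (two CNs share one common VN iff they lie in distinct row groups) and a double count of edges and shared-neighbour pairs; all three claims then become one-line deductions. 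This buys you a strictly stronger lower bound $|W|\ge zp-z^2/3$ in (i), and a proof of (iii) that covers the stated hypothesis in full generality rather than only the disjoint-triple case. The paper's arguments are lighter in that they avoid setting up the explicit coordinate description of $\mathbf{H}(3,p)$, relying only on girth~$6$; your structural lemma is the extra investment, but it pays off in the master identity.
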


The bounds in Theorem \ref{theorem:AB-CCS-bounds} should be compared with those of Theorem \ref{thm:w-range-given-v}. Indeed, we find that Theorem \ref{theorem:AB-CCS-bounds}\ref{prop:lb} gives a tighter lower bound for smaller values of $\mathcal{N}_{V}(C_{u})$, and Theorem \ref{theorem:AB-CCS-bounds}\ref{prop:cycs_CNs_AB} is a tighter upper bound for a particular type of cluster and larger values of $\mathcal{N}_{V}(C_{u})$. Theorem \ref{theorem:AB-CCS-bounds}\ref{prop:contgCNs_AB}, as an exact result, gives a clear improvement on Theorem \ref{thm:w-range-given-v} for a particular type of cluster.

The CNs of the cluster discussed in Theorem \ref{theorem:AB-CCS-bounds}\ref{prop:contgCNs_AB} belong to the same row group in the corresponding parity-check matrix; we call CNs that are given by subsequent rows in the parity-check matrix \textit{contiguous}. In comparison, not all CNs of the cluster discussed in Theorem \ref{theorem:AB-CCS-bounds}\ref{prop:cycs_CNs_AB} are contiguous, since a 6-cycle must span three distinct row groups of a $\mathbf{H}(3,p)$ AB-LDPC code \cite{TSSFC04,DAnantharam10}. By comparing the bounds on CCSs given in Theorem \ref{theorem:AB-CCS-bounds}\ref{prop:contgCNs_AB} and \ref{prop:cycs_CNs_AB}, we see that choosing non-contiguous CNs that comprise $6$-cycles is guaranteed to lower the size of the corresponding CCS. Observing that the number of edges incident to a CCS $W$ in a $(3,p)$-regular graph is equal to $3|W|$, we conclude that in case of $(3,p)$ AB-LDPC codes, any cluster selection scheme should ensure that clustered CNs are not contiguous. 

\vspace{-1ex}
\section{Cluster Optimization}
\vspace{-1ex}

In this section, we propose a cluster optimization strategy, with the goal of minimizing the number of edges connected to each CCS in a Tanner graph. Indeed, these edges are responsible for the propagation of messages between clusters, and consequently for the resulting dependencies between clusters.
Let $E(C_u,W)$ be the set of edges that connect $C_u$ to its CCS $W$, $E(\bar{C}_u,W)$ the set of edges connecting $\bar{C}_u$ to $W$, and $\zeta(C_u)\triangleq |E(\bar{C}_u,W)|+|E(C_u,W)|$ the total number of edges by which $C_u$ is connected to $\bar{C}_u$ via $W$. Recall from Section \ref{sec:acs} that in the case of $(j,k)$-regular Tanner graphs, minimizing $\zeta(C_u)$ is equivalent to minimizing the size of a CCS, since $\zeta(C_u)=j|W|$.
We cluster the set of CNs of a Tanner graph via a greedy iterative process: the $z$ CNs of the first cluster, $C_{1}^*$, are chosen optimally from the set of all CNs. Each subsequent cluster is then chosen optimally from the set of remaining check nodes, with the last cluster consisting of the leftover CNs. Let $1,\ldots,u_{\ceil{\frac{m}{z}}}$ denote the indices of the clusters. Formally, this multi-step optimization is given by
\begin{align}
	\begin{split}
		\label{eq:opt2}
		C_{e}^*= \argmin_{C_{e}\subseteq C\setminus C_{{e-1}}^*\cup \cdots \cup C_{1}^*, \ |C_{e}|=z} \zeta(C_{e}),
	\end{split}
\end{align}
where $e\in1,\ldots,\ceil{\frac{m}{z}}-1$ denotes the index of the optimization step, $C_{e}$ represents a cluster with index $e$, and $C_{e}^*$ denotes the optimal cluster obtained in step $e$. The final cluster is obtained automatically, and hence excluded from (\ref{eq:opt2}).

Note that the complexity of the optimization in (\ref{eq:opt2}) grows exponentially with $m$, as we need to search over $m-z(e-1) \choose z$ possible cluster choices in each optimization step.  To overcome this, we propose a more computationally feasible cluster optimization approach based on our observation that the lower bounds in Theorems \ref{thm:w-range-given-v} and \ref{theorem:AB-CCS-bounds} corresponds to a maximization of cluster-internal cycles: indeed, suppose that in each step, we cluster CNs so that the subgraph induced by the cluster and its neighbors contains as many cycles as possible. Such a cluster will have fewer neighbors compared to one that does not induce cycles. In turn, the maximum possible size of the corresponding CCS (and the number of exiting edges) will likely be reduced (see Remark \ref{rem:thm2}). Thus, the cluster optimization approach based on cycle maximization is given by
%
\begin{align}
	\begin{split}
		\label{eq:opt3}
		\tilde{C}_{e}^*= \argmax_{C_{e}\subseteq C\setminus \tilde{C}_{{e-1}}^*\cup \ldots \cup \tilde{C}_{1}^*, \ |C_{e}|=z} \eta_{\kappa}(C_{e}),
	\end{split}
\end{align}
where $\tilde{C}_{e}^*$ denotes a cycle-maximized cluster, and $\eta_{\kappa}(C_{e})$ denotes the number of length-$\kappa$ cycles in the graph induced by $\mathcal{N}_V(C_{e})\cup C_{e}$. The possible choices of $\kappa$ depends on the girth of the family of codes considered; the smaller the choice of $\kappa$, the lower the optimization complexity, as larger cycles are more difficult to enumerate. In case of LDPC codes, optimized cycle detection algorithms based on message-passing have complexity of $\mathcal{O}(gE^2)$, where $g$ is Tanner graph's girth and $E$ is the total number of edges \cite{LiLin15}. In a $(3,p)$ AB-LDPC code, whose graph has girth $6$, we choose $\kappa=6$. We present Algorithm \ref{alg:cyc_opt} as a method for generating cycle-maximized clusters. 

\vspace{-0cm}
\begin{algorithm}[h]
\setcounter{algocf}{2}
\small
\caption{Optimized clustering via cycle maximization}
\SetAlgoLined
\DontPrintSemicolon
\SetKwInOut{Input}{Input}
\SetKwInOut{Output}{Output}
\Input{$\mathbf{H}$, $z$, $\kappa$}
\Output{Set of cycle-maximized clusters $\tilde{C}_{1}^*,\ldots,\tilde{C}_{\ceil{m/z}}^*$} %
\label{alg:cyc_opt}

Initialize $\mathscr{C}\leftarrow \emptyset$, $x\leftarrow 1$\;

Run a suitable cycle detection algorithm on $G_{\mathbf{H}}$\;
\For{each detected $\kappa$-cycle in $G_{\mathbf{H}}$} {
	Determine $S_x$\;
	Store $S_x$ in $\mathscr{C}$\;
	$x\leftarrow x+1$\;
}

\For{$\kappa$-cycles $1,\ldots,T$ } {
	Determine the CN $c^*\in C\setminus \tilde{C}_{{e-1}}^*\cup \ldots \cup \tilde{C}_{1}^*$ appearing most frequently in $\mathscr{C}$\;
	Find the $K$ sets $S_1,\ldots,S_K \in \mathscr{C}$ containing $c^*$\; 
	$\mathcal{C}\leftarrow S_1\cup\cdots \cup S_K\setminus \tilde{C}_{{e-1}}^*\cup \ldots \cup \tilde{C}_{1}^*$\;
	\If{$|\mathcal{C}|\geq z$} {
		Select $z$ CNs including $c^{*}$ from $\mathcal{C}$ and store them in $\tilde{C}_{e}^*$\;
	}
	\Else {
		Select $z-|\mathcal{C}|$ CNs randomly from $C\setminus \tilde{C}_{{e-1}}^*\cup \cdots \cup \tilde{C}_{1}^*\cup \mathcal{C}$ and store them in $\mathcal{C}'$\; 
		$\tilde{C}_{e}^*\leftarrow \mathcal{C}\cup \mathcal{C}'$\;
	}
}	

\end{algorithm}

The algorithm inputs are a parity-check matrix $\mathbf{H}$, the cluster size $z$ and the length $\kappa$ of the cycles to be detected. Let $T$ be the total number of $\kappa$-cycles in $G_{\mathbf{H}}$. For each $x\in \{1,\ldots,T\}$, let $S_x=\{c_1,\ldots,c_{\kappa/2}\}$ denote the set of $\kappa/2$ CNs in the $\kappa$-cycle indexed by $x$, and define $\mathscr{C}=\{S_1,\ldots,S_T\}$. In Line 13, the algorithm selects $z$ distinct CNs, including $c^*$, appearing in $S_1,\ldots,S_K$, where $K$ is the total number of sets $S_{x}$ containing $c^*$. Note from Algorithm \ref{alg:cyc_opt} that the optimization complexity depends on $T$, which is expected to be much smaller than $m-z(e-1) \choose z$ for sparse $\mathbf{H}$. For example, in a $(3,p)$ AB-LDPC code, there are only $T=p^2(p-1)$ 6-cycles in the Tanner graph \cite{DAnantharam10}.
In~Section~5 we consider the case where $p\geq 5$ and $z=7$. For these parameters and for $e=1$, the number of cluster choices using the optimization in \eqref{eq:opt2} is ${\gamma p \choose z}=\mathcal{O}(p^z)$  as opposed to $p^2(p-1)$ using \eqref{eq:opt3}.

\vspace{-3ex}
\section{Experimental Results}
\vspace{-1ex}
\label{sec:results}
In this section, we perform learned sequential decoding by employing cluster-based Q-learning with contiguous, random, and (cycle-)optimized clusters.
 
The resulting schemes are denoted by MQC, MQR, and MQO, respectively. As a benchmark, we compare these three schemes to that wherein the MAB problem is solved by letting the action-value function be equal to the {Gittins index (GI)}: roughly speaking, the GI represents the long-term expected reward for taking a particular action in a particular state under the assumption that the arms are independent bandit processes \cite{Sutton15,Git79}. We denote this benchmark scheme by MGI. We then utilize each of these schemes for sequential decoding of both $(3,7)$ AB-LDPC and random  $(3,6)$-regular LDPC codes. Specifically, Algorithm~2 is used to implement the three cluster-based learning approaches mentioned above. 

As a first step towards other RL approaches, we have implemented a NS decoder based on Thompson Sampling \cite{Russoetal18}, NS-TS, which performs  better than flooding, but worse than our proposed Q-learning scheme. Our NS-TS scheme tracks the densities of the messages via a Gaussian approximation and uses the MSE $(m'_{a\rightarrow v}-m_{a\rightarrow v})^2$ as a non-central chi-square distributed reward, sampled in each sequential decoding step. 


%
For both considered codes, we let the learning parameters be as follows: $\alpha=0.1$, $\beta=0.9$, $\epsilon=0.6$, $z=7$, $M=4$, $\ell_{\max}=25$, and $|\mathscr{L}|=1.25\times 10^6$. We note that this choice of $|\mathscr{L}|$ is sufficiently large in order to accurately estimate the action-value function. For clustered Q-learning, training is accomplished by appropriately updating the CN scheduling step of Algorithm 2 for each code, and the corresponding action-value functions are used to perform MQC, MQR, and MQO. We compare the performances of these MAB-NS schemes with the existing non-Q-learning-based BP decoding schemes of flooding, MGI, and NS for $\ell_{\max}=25$.

The bit error rate (BER), given by $\Pr[\hat{x}_i\neq x_i]$, $i\in [[n-1]]$, vs. channel signal-to-noise ratio (SNR) $E_b/N_0$ in dB of $(3,6)$-regular and $(3,7)$ AB-LDPC codes using these decoding techniques are shown in Fig.~\ref{fig:res} on the left- and right-hand side, respectively. 

The experimental results reveal that MQO is superior to the other decoding schemes in terms of BER performance for both codes. In Table \ref{tab:tab}, we compare the average number of CN to VN messages propagated in the considered decoding schemes to attain the results in Fig.~\ref{fig:res}. The numbers without (resp. \hspace{-0.15cm} with) parentheses correspond to the $(3,6)$-regular (resp.~$(3,7)$ AB-) LDPC code. We note that the MQO scheme, on average, generates a lower number of CN to VN messages when compared to the other decoding schemes.

Moreover, in contrast to NS, RL-based schemes avoid the computation of residuals in real-time, providing a significant reduction in message-passing complexity for short LDPC codes.

\begin{table}[h]
\caption{\label{tab:tab} Average number of CN to VN messages propagated in various decoding schemes for a $(3,6)$-regular ($(3,7$) AB-) LDPC code to attain the results shown in Fig.~\ref{fig:res}.}
\tiny
\vspace{2ex}
\centering
\resizebox{\columnwidth}{!}{
	\begin{tabular}{llllll}
		\toprule
		SNR (dB)
		&$0.5$
		&$1$ 
		&$1.5$ 
		&$2$
		&$2.5$ \\
		
		\toprule
		flooding&$27040$ $(27913)$&$21716$ $(24532)$&$13327$ $(16098)$&$6316$ $(8117)$&$2500$ $(3064)$\\
		\midrule
		MGI&$308$ $(365)$&$275$ $(324)$&$234$ $(281)$&$210$ $(247)$&$174$ $(205)$\\
		\midrule
		NS&$293$ $(362)$&$266$ $(332)$&$229$ $(279)$&$203$ $(245)$&$173$ $(203)$\\
		\midrule
		MQC&$351$ $(411)$&$299$ $(369)$&$266$ $(313)$&$218$ $(264)$&$190$ $(215)$\\
		\midrule
		MQR&$310$ $(368)$&$275$ $(326)$&$235$ $(286)$&$214$ $(244)$&$182$ $(214)$\\
		\midrule
		MQO&$264$ $(367)$&$237$ $(333)$&$209$ $(283)$&$181$ $(243)$&$163$ $(208)$\\
		\bottomrule
	\end{tabular}
}
	\vspace{0.2cm}
\end{table}

\begin{figure*}[tb]
\begin{subfigure}{.49\textwidth}
  \centering
  \includegraphics[scale=0.275]{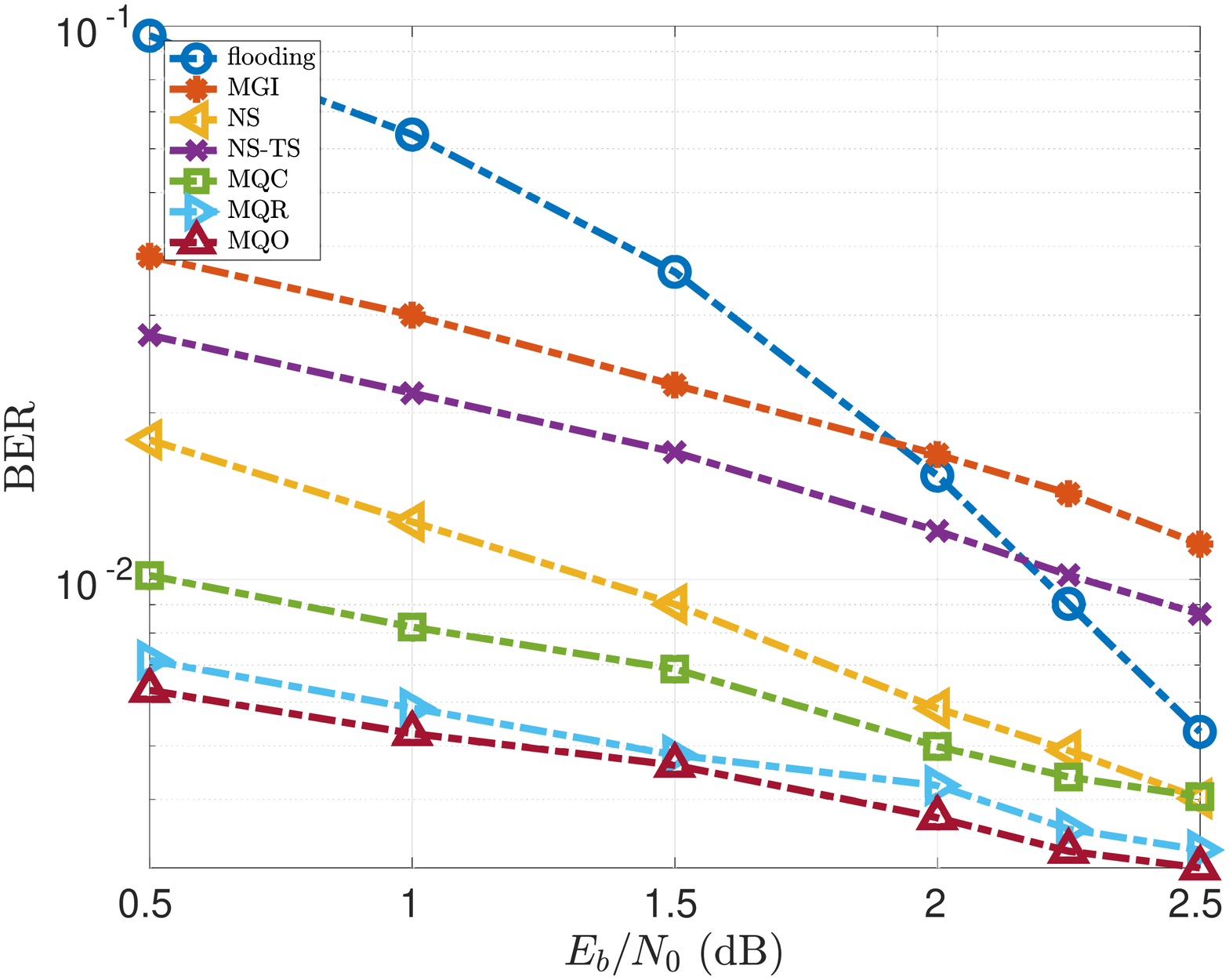}
\end{subfigure}
\hfill
\begin{subfigure}{.49\textwidth}
  \centering
  \includegraphics[scale=0.275]{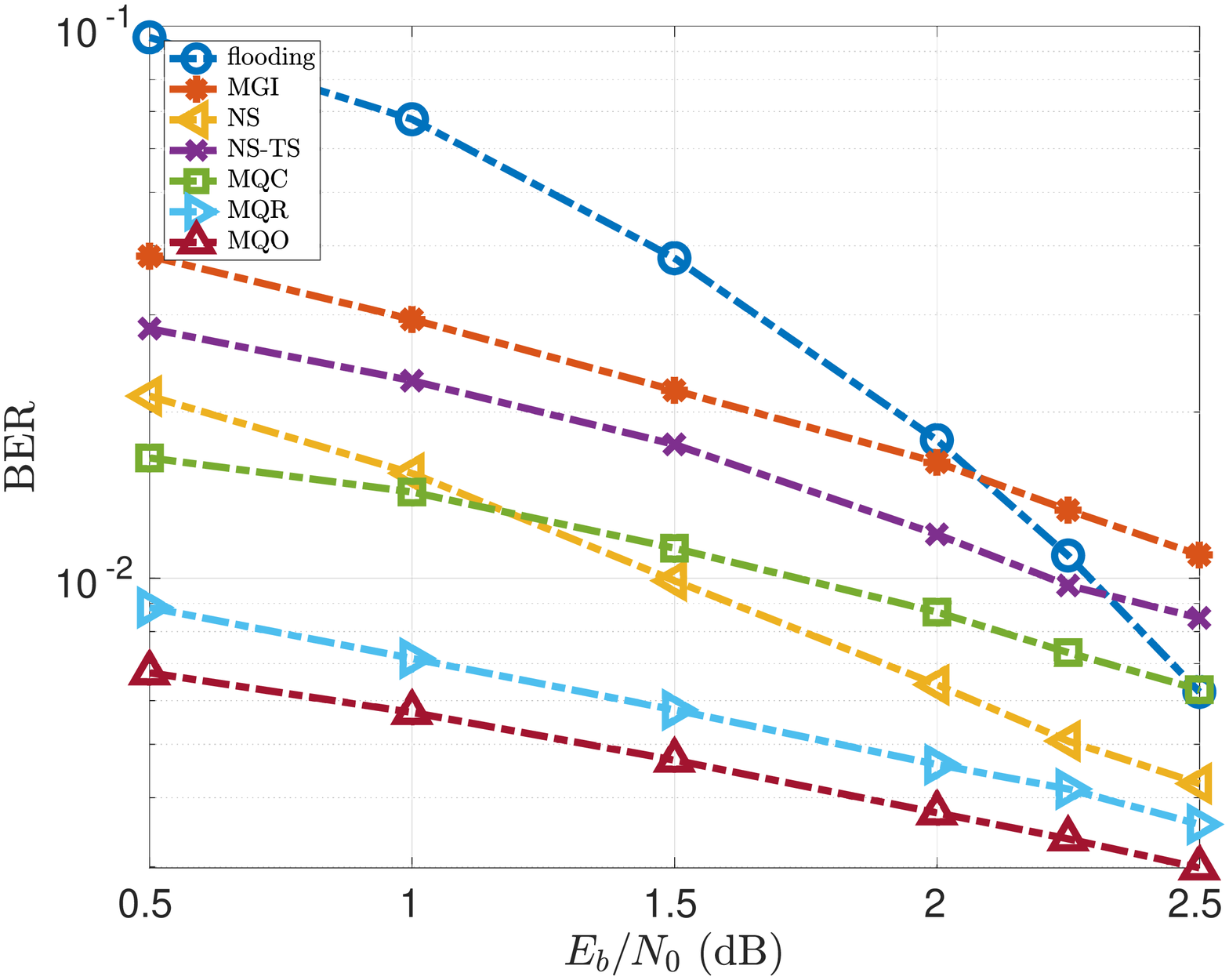}
\end{subfigure}
\vspace{1ex}
\caption{BER results using different BP decoding schemes for a $(3,6)$-regular LDPC (left figure) and $(3,7)$ AB-LDPC code (right figure, code is lifted), with block length $n=196$.}
\vspace{1ex}
\label{fig:res}
\label{fig:results}
\end{figure*}
%

\vspace{-4ex}
\section{Conclusions}
We presented a novel RL-based decoding framework to optimize the scheduling of BP decoders for short-to-moderate length LDPC codes. The main ingredient is a new state space clustering approach to significantly reduce the learning complexity. Experimental results show that optimizing the cluster structure by maximizing cluster-internal short cycles provides marked performance gains when compared with both previous scheduling schemes and clustered Q-learning with non-optimized clusters. These gains include lowering both BER and message-passing complexity. As Bayesian inference over graphical models is at the core of many machine learning applications, ongoing work includes extending our RL approach to other applications involving  BP-based message passing over a factor graph defined by an underlying probabilistic model.

\newpage
\section*{Broader Impact}
The proposed approach outlines a comprehensive RL framework for low-complexity, low-latency, decoding of short-to-moderate length LDPC codes. This approach has broader technological and societal impacts through several complementary mechanisms, elaborated on as follows. One broader impact is on the advancement of information technology  -- and its benefits to society -- through newly established theory and practice in the area of RL-based error correction for linear graph-based codes. As global per-capita information usage continues to grow significantly over the next decade, considerable improvements in error correcting codes will be required in order to satisfy the high-throughput, low-latency requirements of emerging communication systems, while ensuring scalable service requirements for a host of heterogeneous devices. Our application of RL provides a significant gain in performance while reducing decoding complexity for linear graph-based codes. As a consequence, the results of this work have the potential to materially impact our daily lives.
In addition, the results presented in this work have the potential to have a significant transformative impact on other critical applications employing low-latency, reliable networked communication. Among many others, these include applications in the fields of healthcare, environmental monitoring, and finance.

\section*{Acknowledgment}
Funding in direct support of this work: NSF grant ECCS-1711056 and by the Combat Capabilities Development Command Army Research Laboratory under Cooperative Agreement Number W911NF-17-2-0183. We thank the anonymous reviewers and the meta reviewer for their comments and suggestions to improve the paper.

\bibliographystyle{IEEEtran}
\bibliography{Bib}

\newpage
\section*{\centering{Supplementary Material}}

\vspace{0cm}
\section*{S.1 Proof of Theorem 1}
The proofs are ordered according to the claims in the theorem as follows. \\
	\vspace{-0.5cm}
	\begin{enumerate}[leftmargin=*,align=left]
	\item There are $p$ VN neighbors of a CN $c\in C_u$, and none of these VN neighbors can share another CN neighbor due to the absence of 4-cycles. Thus, the number of VNs in $\mathcal{N}_V(c)$ that have no neighbors outside of $C_u$ is at most $(z-1)/2$ (each of these VNs has two other neighbors in the remaining $z-1$ CNs of $C_u$, and all are distinct). Thus, there are at least $p-(z-1)/2$ VNs adjacent to $c$ that have at least one neighbor outside of $C_u$. This is true for all choices of $c$. Since we may be counting each of these VNs up to two times, there are at least $(p-[(z-1)/2])z/2$ VNs in $W$. The result follows.
 	\item If all the rows corresponding to the CNs of cluster $C_u$ are in the same row group, then no two CNs in $C_u$ will have any VNs in common. Hence, each VN in $\mathcal{N}_V(C_u)$ must also be adjacent to $\bar{C}_u$, implying that $\mathcal{N}_V(C_u)$ is a CCS with $|\mathcal{N}_V(C_u)|=|C_u|p=zp$.
 	\item Suppose that the CNs in $C_u$ form triples, and let $\phi=\{\{c_1,c_2,c_3\},\ldots,\{c_{z-2},c_{z-1},c_z\}\}$, $|\phi|=z/3$, be a set of all these triples. Suppose that each CN triple in $\phi$ is associated to a non-overlapping 6-cycle. Since, in this case, there will be $|\phi|$ non-overlapping 6-cycles in the Tanner graph induced by $C_u\cup \mathcal{N}_V(C_u)$, and each 6-cycle has $3$ VNs, the number of VNs that have degree $2$ with respect to $C_u$ will be $|\phi|\times 3=z$. Note that there are $zp$ edges emanating from $C_u$ since each CN degree is $p$. Hence, in the worst case, there will be $zp-z$ distinct VNs in $\mathcal{N}_V(C_u)$, which is also a CCS since each of these degree $3$ VNs are also connected to $\bar{C}_u$. 
 	\end{enumerate}
	\vspace{-0.25cm} 
	\hfill\(\Box\)




\newpage
\section*{S.2 Algorithm 1}

\begin{algorithm}[H]
\caption{MAB-NS for LDPC codes}
\SetAlgoLined
\DontPrintSemicolon
\SetKwInOut{Input}{Input}
\SetKwInOut{Output}{Output}
\Input{$\mathbf{L}$, $\mathbf{H}$}
\Output{reconstructed codeword }
\label{alg:MAB-NS}

Initialization: \;
\hspace{0.25cm} $\ell\leftarrow 0$\;
\hspace{0.25cm} $m_{c\rightarrow v}\leftarrow 0$ \tcp*{for all CN to VN messages}
\hspace{0.25cm} $m_{v\rightarrow c}\leftarrow L_v$ \tcp*{for all VN to CN messages}
\hspace{0.25cm} $\hat{\mathbf{L}}_\ell\leftarrow \mathbf{L}$\;
\hspace{0.25cm} $\hat{\mathbf{S}}_{\ell}\leftarrow \mathbf{H}\hat{\mathbf{L}}_\ell$\; 
\ForEach{$a\in [[m]]$} {
		$s_{\ell}^{(a)}\leftarrow g_M(\hat{s}_\ell^{(a)})$\tcp*{$M$-level quantization}
}  
\tcp{decoding starts}
\If{stopping condition not satisfied or $\ell<\ell_{\max}$} {
	$s\leftarrow$ index of $\mathbf{S}_\ell$\;
select CN $a$ according to an optimum scheduling policy\; 
\ForEach{VN $v\in \mathcal{N}(a)$} {
	compute and propagate $m_{a\rightarrow v}$\;
	\ForEach{CN $c\in \mathcal{N}(v)\setminus a$} {
		compute  and propagate $m_{v\rightarrow c}$\;
	}
	$\hat{L}_\ell^{(v)}\leftarrow \sum_{c\in \mathcal{N}(v)} m_{c\rightarrow v}+L_v$ \tcp{update LLR of $v$}
}
\ForEach{CN $j$ that is a neighbor of $v\in \mathcal{N}(a)$}{
	$\hat{s}_\ell^{(j)}\leftarrow \sum_{v'\in \mathcal{N}(j)} \hat{L}_\ell^{(v')}$\;
	$s_{\ell}^{(j)}\leftarrow g_M(\hat{s}_\ell^{(j)})$ \tcp*{update syndrome $\mathbf{S}_\ell$} 
}
$\ell\leftarrow \ell+1$\tcp*{update iteration}
}
\end{algorithm}

\newpage
\section*{S.3 Details of Algorithm 2}
\begin{algorithm}[H]
\caption{Clustered Q-learning}
\SetAlgoLined
\DontPrintSemicolon
\SetKwInOut{Input}{Input}
\SetKwInOut{Output}{Output}
\Input{$\mathscr{L}$, $\mathbf{H}$}
\Output{Estimated $Q_{\ell_{\max}}(s_u,a_u)$ for all $u$} 
\label{alg:cls_Q}scene14

Initialization: $Q_0(s_u,a_u)\leftarrow 0$ for all $s_u$, $a_u$ and $u$
\hspace{0.25cm} 

\For{each $\mathbf{L}\in \mathscr{L}$} {
	$\ell\leftarrow 0$\;
	$\hat{\mathbf{L}}_\ell\leftarrow \mathbf{L}$\;
	$\hat{\mathbf{S}}_\ell\leftarrow \mathbf{H}\hat{\mathbf{L}}_\ell$\;
	\ForEach{$a\in [[m]]$} {
		$s_{\ell}^{(a)}\leftarrow g_M(\hat{s}_\ell^{(a)})$ \tcp*{$M$-level quantization}
	} 
	\While{$\ell<\ell_{\max}$}
	{
		schedule CN $a_u$ according to an $\epsilon$-greedy approach\;
		select $u$ as cluster index of CN $a_u$\; 
		$\mathbf{S}_\ell^{(u,z)}\leftarrow s_{\ell}^{(j_1)},s_{\ell}^{(j_2)},\ldots,s_{\ell}^{(j_z)}$\tcp{CN indices $j_1,j_2,\ldots,j_z \in \{0,\ldots,m-1\}$ are in ascending order for MQC, and randomly ordered for MQR. For MQO, their ordering depends on the underlying cluster $\tilde{C}_u^*$}
		$s_u\leftarrow$ index of $\mathbf{S}_\ell^{(u,z)}$\;
		
		\ForEach{VN $v\in \mathcal{N}(a_u)$} {
			compute and propagate $m_{a_u\rightarrow v}$\;
			\ForEach{CN $c\in \mathcal{N}(v)\setminus a_u$} {
				compute  and propagate $m_{v\rightarrow c}$\;
			}	
			$\hat{L}_\ell^{(v)}\leftarrow \sum_{c\in \mathcal{N}(v)} m_{c\rightarrow v}+L_v$ \tcp{update LLR of $v$}
		}
		\ForEach{CN $j$ that is a neighbor of $v\in \mathcal{N}(a_u)$}{
			$\hat{s}_\ell^{(j)}\leftarrow \sum_{v'\in \mathcal{N}(j)} \hat{L}_\ell^{(v')}$\;
			$s_{\ell}^{(j)}\leftarrow g_M(\hat{s}_\ell^{(j)})$ \tcp*{update syndrome $\mathbf{S}_\ell$} 
		}
		$s_u'\leftarrow$ index of updated $\mathbf{S}_\ell^{(u,z)}$\;
		$R_\ell(s_u,a_u,s_u')\leftarrow$ highest residual of CN ${a_u}$\;
		compute $Q_{\ell+1}(s_u,a_u)$\; 
		$\ell\leftarrow \ell+1$\tcp*{update iteration}
	}
}
\end{algorithm}

\end{document}